\newcommand{\bra}[1]{\left< #1 \right|} 
\newcommand{\ket}[1]{\left| #1 \right>} 
\newtheorem{lemma}{Lemma}
\newtheorem{theorem}{Theorem}
\theoremstyle{definition}
\newtheorem{example}{Example}
\newcommand{\eq}[1]{\hyperref[eq:#1]{Eq.~(\ref*{eq:#1})}}
\newcommand{\eqs}[2]{Eqs.\ \hyperref[eq:#1]{(\ref*{eq:#1})} and \hyperref[eq:#2]{(\ref*{eq:#2})}}
\renewcommand{\sec}[1]{\hyperref[sec:#1]{Section~\ref*{sec:#1}}}
\newcommand{\fig}[1]{\hyperref[fig:#1]{Figure~\ref*{fig:#1}}}
\newcommand{\thm}[1]{\hyperref[thm:#1]{Theorem~\ref*{thm:#1}}}
\newcommand{\lem}[1]{\hyperref[lem:#1]{Lemma~\ref*{lem:#1}}}
\newcommand{\ex}[1]{\hyperref[ex:#1]{Example~\ref*{ex:#1}}}
\DeclareMathOperator{\SO}{SO}
\DeclareMathOperator{\poly}{poly}
\DeclareMathOperator{\diag}{diag}
\newcommand{\swap}{\textsc{swap}}
\newcommand{\fswap}{\textrm{f-}\swap}
\newcommand{\iswap}{\textrm{i-}\swap}
\newcommand{\fs}{\textrm{f}\textsc{s}}
\newcommand{\is}{\textrm{i}\textsc{s}}
\newcommand{\cz}{\textsc{cz}}
\newcommand{\A}{\mathcal{A}}
\newcommand{\B}{\mathcal{B}}
\begin{document}
\title{The computational power of matchgates \\ and the XY interaction on arbitrary graphs}

\author{Daniel J.\ Brod}
\email{brod@if.uff.br}
\affiliation{Instituto de F\'isica, Universidade Federal Fluminense, Av.\ Gal.\ Milton Tavares de Souza s/n, 
Gragoat\'a, Niter\'oi, RJ, 24210-340, Brazil}
\affiliation{Institute for Quantum Computing, University of Waterloo, 200 University Ave.\ W., Waterloo, ON, N2L 3G1, Canada}

\author{Andrew M.\ Childs}
\email{amchilds@uwaterloo.ca}
\affiliation{Institute for Quantum Computing, University of Waterloo, 200 University Ave.\ W., Waterloo, ON, N2L 3G1, Canada}
\affiliation{Department of Combinatorics \& Optimization, University of Waterloo, 200 University Ave.\ W., Waterloo, ON, N2L 3G1, Canada}

\date{\today}

\begin{abstract}
Matchgates are a restricted set of two-qubit gates known to be classically simulable when acting on nearest-neighbor qubits on a path, but universal for quantum computation when the qubits are arranged on certain other graphs. Here we characterize the power of matchgates acting on arbitrary graphs. Specifically, we show that they are universal on any connected graph other than a path or a cycle, and that they are classically simulable on a cycle. We also prove the same dichotomy for the XY interaction, a proper subset of matchgates related to some implementations of quantum computing.
\end{abstract}

\maketitle

\section{Introduction} \label{sec:intro}

Studying the computational power of restricted sets of operations can shed light on the nature of quantum speedup. From a theoretical perspective, such studies can determine what resources are necessary and/or sufficient for universal quantum computation. This issue is also relevant in experimental settings, where available operations or resources may be restricted.

In this paper, we focus on the class of operations known as matchgates. Matchgates are a class of $2$-qubit gates defined by Valiant \cite{Valiant02} that are closely related to noninteracting fermions \cite{Terhal02}. To define matchgates, let $G(A,B)$ denote the unitary gate that acts as unitaries $A$ and $B$, respectively, on the even- and odd-parity subspaces of a 2-qubit Hilbert space:
\begin{equation} \label{eq:Matchgate}
G(A,B) = \begin{pmatrix}
A_{11} & 0 & 0 & A_{12} \\
0 & B_{11} & B_{12} & 0 \\
0 & B_{21} & B_{22} & 0 \\
A_{21} & 0 & 0 & A_{22}
\end{pmatrix}.
\end{equation}
The gate $G(A,B)$ is a \emph{matchgate} if $\det A = \det B$.

As originally shown by Valiant \cite{Valiant02}, and soon after by Terhal and DiVincenzo \cite{Terhal02} in the setting of fermionic linear optics, a quantum computation composed only of (i) qubits (arranged on a path) initially prepared in a product state, (ii) a circuit of nearest-neighbor matchgates, and (iii) a final measurement in the computational basis can be efficiently simulated on a classical computer. Curiously, the computational power of matchgates varies greatly with seemingly small changes in spatial restrictions: by relaxing the nearest-neighbor condition and allowing matchgates to also act on next-nearest neighbors, they become universal for quantum computation, as shown by Kempe, Bacon, DiVincenzo, and Whaley \cite{Kempe01b,Kempe02}. Both regimes were revisited and extended by Jozsa and Miyake \cite{Jozsa08b}, who also provided simpler proofs.

More generally, one can consider matchgates restricted to act on pairs of qubits joined by the edges of any graph. In \cite{Brod12} it was shown that matchgates can implement universal quantum computation on many graphs, such as a complete binary tree, a star, or a path with one extra vertex appended to some point, and the authors suggested that the path might be a pathological instance where matchgates are classically simulable. The authors also left as an open question whether there is a regime of intermediate computational power, between that of classical and quantum computers, such as in recent proposals using  commuting operators \cite{Bremner10} or linear optics \cite{Aaronson11}. 

Here we use ideas from \cite{Brod12} to prove that matchgates are universal on any connected graph other than a path or a cycle. We also adapt previous results \cite{Terhal02, Jozsa08b} to show that matchgates are classically simulable on a cycle. Thus we completely characterize the power of matchgates on arbitrary graphs, resolving the two open questions from \cite{Brod12}.

Furthermore, we consider the computational power of the XY (or anisotropic Heisenberg) interaction acting on the edges of a graph. The XY interaction generates matchgates, so it is non-universal on paths \cite{Terhal02} and cycles. In fact, even this restricted class of matchgates is universal when acting on next-nearest neighbors \cite{Kempe01b,Kempe02}. Here we show that the XY interaction is also universal on any connected graph other than a path or a cycle, so it is as powerful as general matchgates. 

This paper is organized as follows. In \sec{univ_Jozsa} we review the proof of universality of matchgates acting on nearest and next-nearest neighbors on a path, focusing on ideas used in our first main result. In \sec{examples} we present two instructive examples from \cite{Brod12} that lead to the proof, in \sec{univ_arbit_graphs}, that matchgates are universal on any connected graph other than a path or a cycle. In \sec{simul_line} we review the classical simulation of matchgates on a path, and in \sec{simul_cyc} we show how this result can be adapted to provide an equivalent result for matchgates acting on a cycle. Finally, in \sec{XY} we specialize the result of \sec{univ_arbit_graphs} and show that the subset of matchgates known as the XY interaction is also universal on any graph other than a path or cycle. Although this latter result implies the first, we present the results separately as the first proof is easier and develops tools that are useful later, while the simulation using the XY interaction is less explicit. 

\smallskip
\noindent\textbf{Notation and terminology.} Throughout this paper we consider matchgates acting on the edges of a graph, unless stated otherwise, and we refer to ``universal graphs'' as those on which such matchgates are universal. We restrict our attention to connected graphs without loss of generality, as qubits in different components of a general graph cannot interact, so the components can be treated separately. By a universal gate set we mean a set that can simulate a universal quantum computer with at most polynomial overhead in number of operations and number of qubits. We extensively use the concept of encoded universality (see, e.g., \cite{Kempe01b,Kempe02,DiVincenzo00}), where one logical qubit is encoded in two or more physical qubits, so we occasionally denote logical basis states or logical operators that act on an encoded space by a subscript $L$ when there is risk of ambiguity. We also interchangeably refer to a set of quantum gates by their unitary matrices or their generating Hamiltonians, as we will not consider the case of discrete sets of unitaries. 

\section{Universality for arbitrary graphs} \label{sec:match_arbit}

\subsection{Matchgates acting on nearest and next-nearest neighbors} \label{sec:univ_Jozsa}

We begin by giving a simple proof, along the lines of \cite{Jozsa08b}, that matchgates are universal on a path when supplemented by the 2-qubit $\swap$ gate. Consider an encoding of a logical qubit into two physical qubits, given by
\begin{align}
\ket{0}_L = \ket{00},  \notag \\
\ket{1}_L = \ket{11}. \label{eq:evenencoding}
\end{align}

Clearly an encoded single-qubit gate $A_L$ can be implemented simply by applying the matchgate $G(A,A)$ to the pair of physical qubits that encode the logical qubit.

The other requirement for a universal set is an entangling 2-qubit gate, such as the controlled-$Z$ ($\cz$) gate. Consider two adjacent logical qubits encoded in physical pairs labeled $\{1,2\}$ and $\{3,4\}$, respectively. Then a $\cz_L$ between the logical qubits can be implemented simply by a $\cz$ between the neighboring qubits 2 and 3. Note that this is not a matchgate---indeed, no nearest-neighbor matchgate can generate entanglement while preserving the encoding of \eq{evenencoding} \cite{Brod11}, as this would contradict that matchgates are classically simulable when acting on a path. Therefore the entangling gate must be implemented with the aid of some non-matchgate operation. One such example is the sequence
\begin{equation} \label{eq:CZL}
\cz = \fswap \cdot \swap.
\end{equation}
Here $\swap = G(I,X)$ is not a matchgate. The closely related gate 
\begin{equation} \label{eq:fS}
  \fswap := G(Z,X) = \begin{pmatrix}
1 & 0 & 0 & 0 \\
0 & 0 & 1 & 0 \\
0 & 1 & 0 & 0 \\
0 & 0 & 0 & -1
\end{pmatrix}
\end{equation}
is a matchgate that swaps the two qubits and induces a minus sign when both are in the $\ket{1}$ state (so we call it the fermionic $\swap$). In \eq{CZL} we can interpret the $\swap$ as undoing an undesired interchange of the qubits induced by the $\fswap$ during the entangling operation.

We thus conclude that matchgates, when supplemented by the $\swap$, form a universal set. Furthermore, the $\swap$ is only applied on disjoint sets of physical qubits (i.e., $\{2i, 2i+1\}$ for $1 \leq i \leq n$, where $n$ is the total number of logical qubits), so no qubit is swapped more than one position away from its original place \cite{Jozsa08b}. Thus the $\swap$ gate in this construction can be replaced by allowing matchgates to also act on second and third neighbors on the path. In fact, matchgates between only nearest and next-nearest neighbors are already universal, as shown in \cite{Kempe02} and \cite{Jozsa08b} using alternative encodings, where each logical qubit is encoded into $3$ and $4$ physical qubits, respectively.

\subsection{Matchgates acting on arbitrary graphs} \label{sec:examples}

Now suppose that the qubits are arranged on the vertices of a more general (connected) graph, and matchgates can act between every neighboring pair of qubits. Henceforth, we restrict ourselves to interactions between nearest neighbors in these general graphs. In this setting, the result mentioned at the end of the previous section straightforwardly translates to the universality of the ``triangular ladder'' graph of \fig{triangladder} \cite{Kempe02}. 

\begin{figure}
\capstart
\centering
\includegraphics[width=0.3\textwidth]{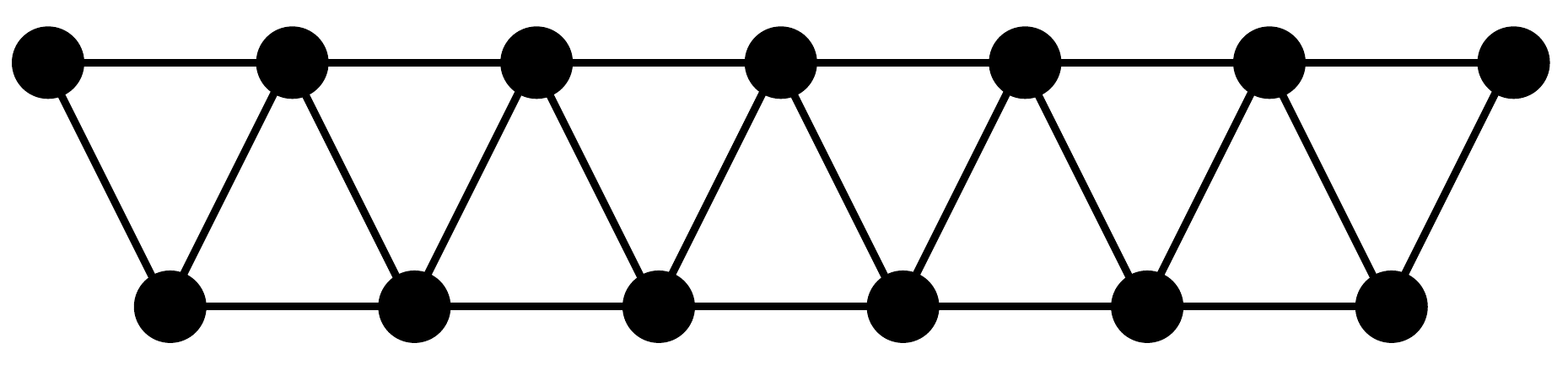}  
\caption{In a triangular ladder graph, vertices have a one-to-one correspondence to vertices of a path such that nearest neighbors on the triangular ladder correspond to nearest and next-nearest neighbors on the path.}	
\label{fig:triangladder}
\end{figure}

In a previous paper \cite{Brod12}, it was proven that matchgates are also universal on many other graphs. Here we extend this result to show that matchgates are universal on any graph that is not a path or a cycle.

Before giving the proof for the most general case, it is instructive to work through two cases that exemplify the main ideas. Both examples are taken from \cite{Brod12} with small adaptations.

\begin{example} \label{ex:path}
Suppose the qubits are arranged according to a graph of the form shown in \fig{appendedline}, which is obtained by joining a new vertex to some degree-2 vertex of a path. To prove that such a graph is universal, we use two tricks from \cite{Brod12}. 

\begin{figure}
\capstart
\centering
\includegraphics[width=0.3\textwidth]{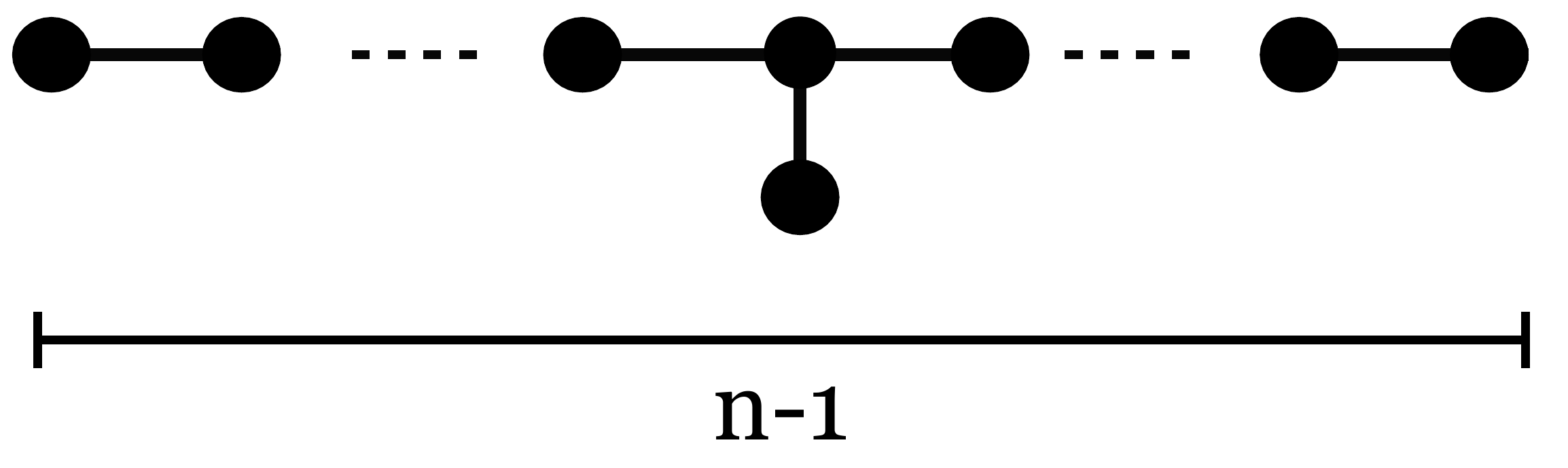}  
\caption{An $n$-vertex graph obtained from an $(n-1)$-vertex path by joining a new vertex to some degree-$2$ vertex of the original path.}	
\label{fig:appendedline}
\end{figure}

First, suppose we have a logical qubit in an arbitrary state $\ket{\Psi}_L=\alpha \ket{00} + \beta \ket{11}$ and a third physical qubit in any state $\ket{\phi}$. We then have the identity
\begin{equation} \label{eq:logicswap}
\fs_{12} \fs_{23} \ket{\Psi}_L \ket{\phi} = \ket{\phi} \ket{\Psi}_L,
\end{equation}
where $\fs$ is shorthand for the $\fswap$ gate, and subscripts denote the pair being acted on. The above identity follows from the trivial observation that the logical qubit is always a superposition of $\ket{00}$ and $\ket{11}$, so the $\fswap$ gate either does not induce a minus sign, or does so twice.  Thus, the $\fswap$ can replace the $\swap$ provided it exchanges a complete logical qubit. Note that, by linearity, this holds even if the logical state of qubits $1$ and $2$ is entangled with other logical qubits, as long as it is a physical state of even parity. 

The second trick is the identity
\begin{equation} \label{eq:0swap}
\fs \ket{0} \ket{\psi} = \ket{\psi} \ket{0}
\end{equation}
where $\ket{\psi}$ is the state of any physical qubit.
This follows simply because
when either of the qubits is in the $\ket{0}$ state, the $\fswap$ does not induce a minus sign, behaving exactly as the $\swap$. We will use this fact to initialize some ancilla qubits in the $\ket{0}$ state and move them around as necessary. 

We now prove universality for \ex{path}. First note that the graph of \fig{branching} is guaranteed to appear as a subgraph of the one in \fig{appendedline} if the number of vertices is greater than $6$. We refer to the degree-$3$ vertex in that graph---and more generally, to any vertex of degree greater than $2$ in a tree---as a branching point. We initialize two ancilla qubits near the branching point (specifically, at vertices $\alpha$ and $\beta$ in \fig{branching}) as $\ket{0}$ and encode the logical qubits using pairs of adjacent qubits as in \eq{evenencoding}. Depending on the number of vertices and the location of the branching point, some physical qubits might be unpaired, in which case one or two qubits at the endpoints may not be used.

\begin{figure}[t]
\capstart
\centering
\includegraphics[width=0.3\textwidth]{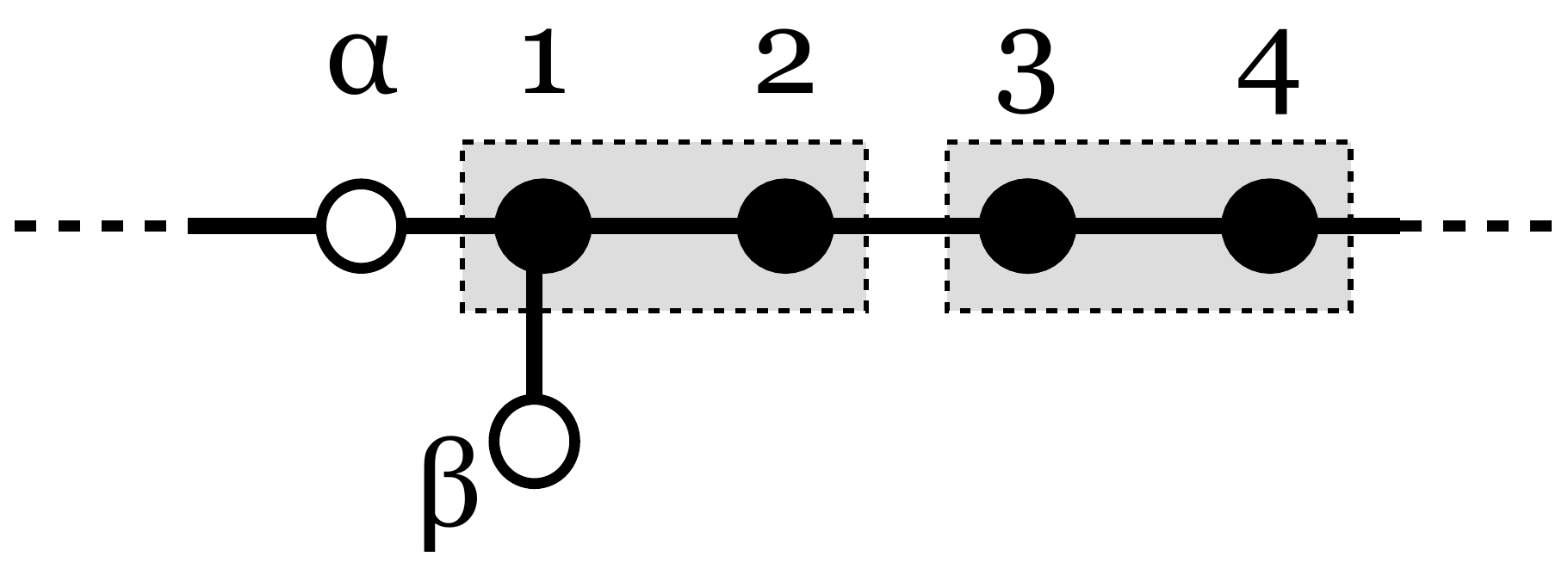}
\caption{Close-up view of the degree-$3$ vertex of the graph in \fig{appendedline}. Vertices labeled $\alpha$ and $\beta$ correspond to ancillas initialized in the $\ket{0}$ state. Vertex pairs $\{1,2\}$ and $\{3,4\}$ correspond to the two logical qubits on which we want to implement a logical $\cz$ gate. The $\alpha$ and $\beta$ ancillas are used to change the order of the state of the other qubits, as per \eq{switch}.}
\label{fig:branching}
\end{figure} 

As discussed above, a logical single-qubit gate $A$ can be implemented simply by a $G(A,A)$ matchgate between adjacent qubits. Now say we want to implement a logical $\cz$ gate between two (not necessarily adjacent) logical qubits. We first use the identity of \eq{logicswap} to place the two desired pairs near the branching point, as in \fig{branching}. In the previous section, we mentioned that the logical $\cz$ can be implemented by a physical $\cz$ between two of the four qubits (e.g., $1$ and $3$, as labeled in \fig{branching}), which in turn is equal to $\swap$ followed by $\fswap$. We can implement this sequence by swapping qubit $2$ with both qubits of the pair ($3$,$4$), which is possible by \eq{logicswap}, and then using the fact that $\alpha$ and $\beta$ are ancillas in the $\ket{0}$ state to switch the order of the qubits placed in vertices $1$ and $2$. This effectively implements the $\swap$ of \eq{CZL}. If we follow this with an $\fswap$ again between qubits $1$ and $2$, the final result is the desired $\cz$ gate. We can then use \eq{logicswap} to return all qubits to their original places. The explicit sequence is
\begin{equation} \label{eq:switch}
\fs_{23} \, \fs_{34} \, \fs_{12} \, \fs_{\beta 1} \, \fs_{12} \, \fs_{\alpha 1} \, \fs_{\beta 1} \, \fs_{12} \, \fs_{\alpha 1} \, \fs_{34} \, \fs_{23}.
\end{equation} 

This sequence uses only matchgates to implement a $\cz$ between the logical qubits which, together with the single-qubit gates mentioned previously, gives a universal set. Since any logical qubit can be moved to any desired location using $O(n)$ $\fswap$ gates, the overhead in the number of such gates grows polynomially with the number of 2-qubit gates in the original circuit.
\end{example}

\begin{example} \label{ex:leaves}
Now suppose the qubits are arranged on a complete binary tree of $m$ levels, as in \fig{binarytree}. This graph has  $n = 2^{m+1}-1$ vertices. Since the longest path contains only $2m-1=O(\log n)$ vertices, the strategy of \ex{path} cannot be trivially adapted to this case: the number of available logical qubits would not be sufficient.

\begin{figure}
\capstart
\centering
\includegraphics[width=0.35\textwidth]{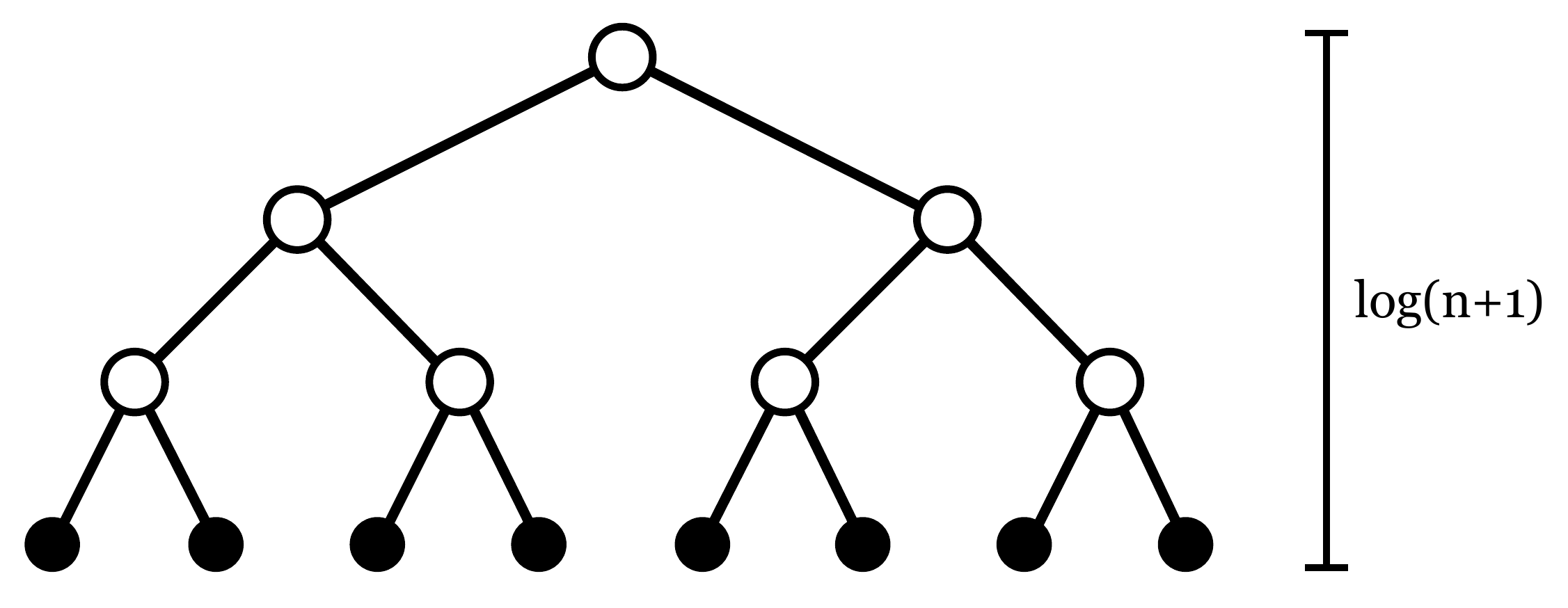}  
\caption{An $n$-vertex complete binary tree. White vertices represent $\ket{0}$ ancillas and black vertices are used in pairs to store computational qubits. This arrangement enables universal computing with matchgates.}
\label{fig:binarytree}
\end{figure}

Instead, we store logical qubits using the $2^m=(n+1)/2$ leaves as shown in \fig{binarytree}. By using the leaves as the computational qubits and filling the paths that connect them with $\ket{0}$ ancillas, we can use the identity of \eq{0swap} to move the state of any qubit to a vertex adjacent to any other desired qubit in less than $2 \log(n/2)$ steps, apply the desired matchgate between them, and return them to their initial positions. This means we can use the $\fswap$ to implement an effective interaction between any pair among the $(n+1)/2$ computational qubits, which clearly is sufficient for universal computation, as per the construction of \sec{univ_Jozsa}. The overhead of this approach is modest: it requires twice the number of qubits and uses $2 \log(n)$ $\fswap$ operations per 2-qubit gate in the original circuit. Note that this approach works for any pairing of physical into logical qubits.
\end{example}

\subsection{Main result} \label{sec:univ_arbit_graphs}

The two examples of the previous section provide the main ideas for a complete characterization of the power of matchgates on arbitrary graphs. To obtain this result, we first prove the following lemma:

\begin{lemma} \label{lem:graph}
Let $T$ be an $n$-vertex tree with $l$ leaves and a longest path of length $p$. Then either (i) $l > \sqrt{n}$ or (ii) $p > \sqrt{n}$.
\end{lemma}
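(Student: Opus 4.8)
The plan is to establish the contrapositive in a quantitative form: I will show that
\[
n \le 1 + l\,\lceil p/2\rceil,
\]
from which, assuming both $l \le \sqrt n$ and $p \le \sqrt n$, one gets $n \le 1 + \tfrac12(n+\sqrt n)$, i.e. $n \le 2 + \sqrt n$; since $(\sqrt n - 2)(\sqrt n + 1) \le 0$ forces $\sqrt n \le 2$, this is impossible once $n \ge 5$, and the finitely many trees with $n \le 4$ (namely $K_2$, $P_3$, $P_4$, $K_{1,3}$) can be checked directly. So the whole problem reduces to proving that displayed inequality.

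The key geometric fact I would use is that in a tree the longest path realizes the diameter, so $p = \mathrm{diam}(T)$ (I am taking ``length'' to be the number of edges; the vertex convention only shifts constants). Consequently $T$ has a center vertex $c$ whose eccentricity equals the radius $\lceil p/2\rceil$, i.e. every vertex of $T$ is within distance $\lceil p/2\rceil$ of $c$. I would then root $T$ at $c$ and observe that every vertex of $T$ lies on the unique path from $c$ to some leaf of $T$ (follow children away from $c$ until stuck). There are at most $l$ such root-to-leaf paths, each consisting of $c$ together with at most $\lceil p/2\rceil$ further vertices, and their union is all of $T$; counting vertices gives exactly $n \le 1 + l\lceil p/2\rceil$.

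The step I expect to be the real obstacle is obtaining the factor $\tfrac12$ rather than $1$ in the bound. Rooting naively at an endpoint of a longest path only yields depth $\le p$ and hence the vacuous $n \le 1 + lp \le 1 + n$; the gain comes precisely from rooting at the \emph{center}, so that the relevant depth is $\lceil p/2\rceil$. (One should also check the mild edge cases: for $n\ge 2$ the rooted tree has at most $l$ leaves regardless of whether $c$ itself is a leaf, so the path count is safe.)

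As an alternative with the same conclusion and a more constructive flavor, I could instead delete the edges of a longest path $v_0v_1\cdots v_p$, leaving a branch $T_i$ hanging off each $v_i$; a branch of $T_i$ of depth exceeding $\min(i,p-i)$ would extend the path, so $\mathrm{depth}(T_i)\le \min(i,p-i)$, and then $|T_i|\le 1 + l_i\min(i,p-i)$ where $l_i$ is the number of leaves of $T$ inside $T_i$. Summing over $i$ with $\sum_i l_i = l$ and $\min(i,p-i)\le p/2$ gives $n \le (p+1) + \tfrac12 lp$, which is just as good up to small constants and finishes the argument in the same way.
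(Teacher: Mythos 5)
Your proof is correct, but it takes a genuinely different route from the paper's. The paper peels the tree apart leaf by leaf: it repeatedly deletes the segment from an arbitrary leaf to its nearest branching point (removing one leaf and at most $p-1$ vertices each time, $l-2$ times in total), then deletes the remaining path, concluding $n \leq (l-2)(p-1)+p < lp$ and hence $\max\{l,p\} > \sqrt{n}$ in one line, with no case analysis. You instead cover the tree by the at most $l$ paths from a center vertex to the leaves, each of length at most the radius $\lceil p/2 \rceil$, obtaining the stronger bound $n \leq 1 + l\lceil p/2\rceil$; the price is that you must import the standard facts that a longest path in a tree realizes the diameter and that a tree's radius is $\lceil \mathrm{diam}/2\rceil$, and that the final arithmetic leaves a handful of small trees to verify by hand (your list should also include the one-vertex tree, where the statement is degenerate---as it also is for the paper's argument). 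Both decompositions are sound, and your edge-versus-vertex length convention only strengthens what you prove. Your approach buys a factor of $2$ in the bound $n \lesssim lp$, which is irrelevant for the lemma as stated but would matter if one wanted to tighten constants in the main theorem; the paper's approach is more elementary, self-contained, and avoids the finite check. Your alternative argument at the end (hanging branches off a longest path and bounding their depths by $\min(i,p-i)$) is also valid and is closer in spirit to the paper's proof, just organized around the longest path rather than iterated leaf removal.
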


\begin{proof}
Choose any leaf $v$ of $T$. Delete every vertex on the path from $v$ to the nearest branching point, not including the branching point (see \fig{arbitree}). Since, by hypothesis, this path has length smaller than $p$, the result is a subtree of $T$ where one leaf and at most $p-1$ vertices are removed. Repeat this procedure until only a path remains (i.e., $l-2$ times). Finally, delete the remaining path, removing the last two leaves and at most $p$ vertices. This process deletes every vertex in $T$. Therefore $n \leq (l-2)(p-1)+p < lp$, so $\max\{l, p\} > \sqrt{n}$ as claimed.
\end{proof}

\begin{figure}
\capstart
\centering
\includegraphics[width=0.3\textwidth]{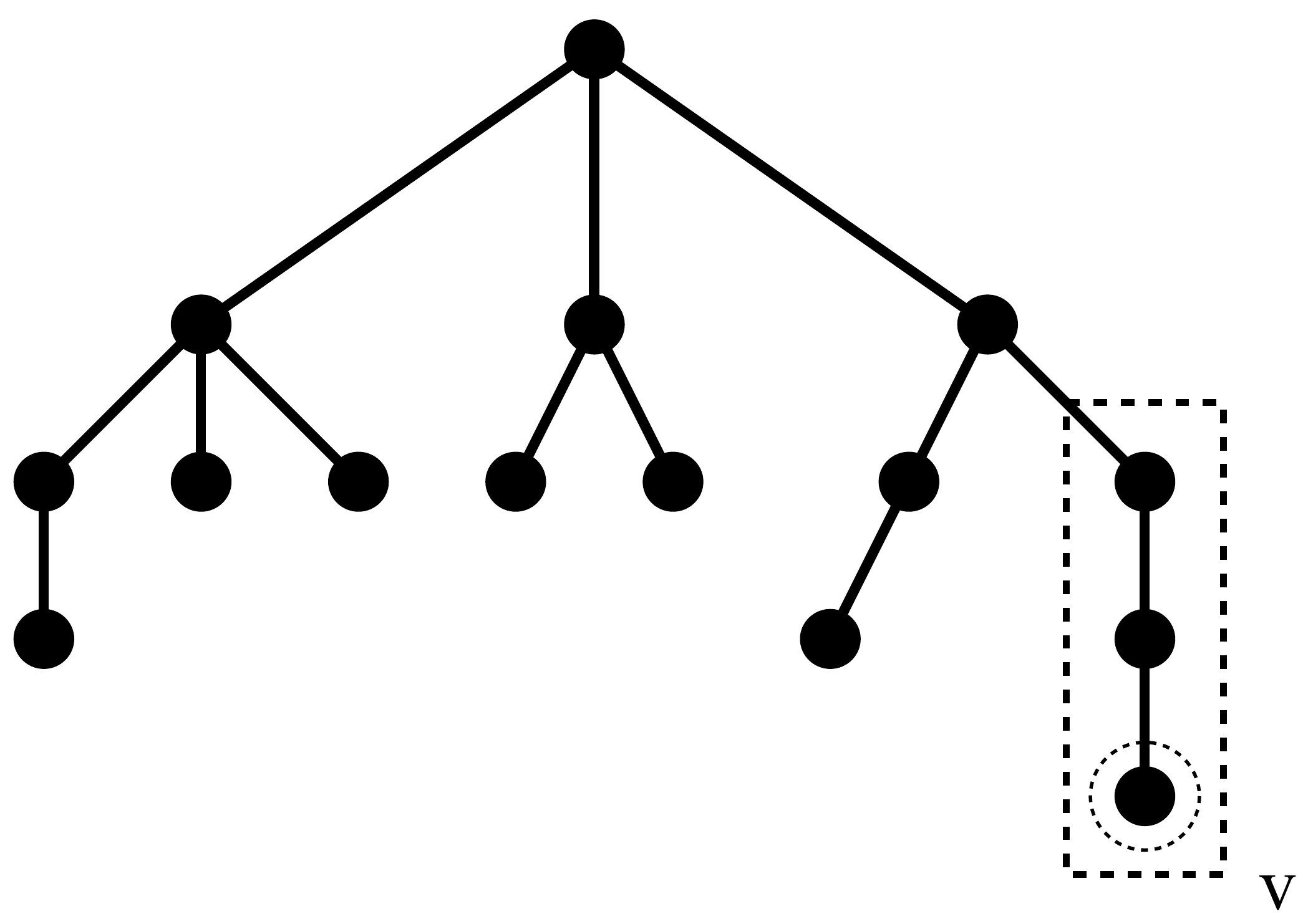}  
\caption{A tree. The dashed rectangle indicates the vertices in the path from $v$ to the nearest branching point, which are deleted in the proof of \lem{graph}. Upon deletion of these vertices, the remaining tree has one fewer leaf, and at most $p-1$ vertices have been removed.}	
\label{fig:arbitree}
\end{figure}

The main result follows straightforwardly from \lem{graph} and the examples of the previous section:

\begin{theorem} \label{thm:maintheo}
Let $G$ be any $n$-vertex connected graph, other than a path or a cycle, where every vertex represents a qubit and we can implement arbitrary matchgates between neighbors in $G$. Then it is possible to efficiently simulate (i.e., with polynomial overhead in the number of operations) any quantum circuit on $\Omega(\sqrt{n})$ qubits.
\end{theorem}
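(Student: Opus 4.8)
The plan is to reduce to the case of trees and then invoke the two example constructions. First I would show that any connected graph $G$ that is not a path or a cycle has a spanning tree $T$ that is not a path. Such a $G$ has a vertex of degree at least $3$, since a connected graph of maximum degree $2$ is a path or a cycle; and if every spanning tree of $G$ were a path, one could take a spanning path $P$, pick an edge $e=xy\in G\setminus P$ (so $x$ and $y$ are nonadjacent in $P$), add $e$, and delete an appropriate edge of the unique resulting cycle so that an endpoint of $e$ internal to $P$ acquires degree $3$ --- and in the degenerate case $e=v_1v_n$, where $P\cup\{e\}$ is a Hamiltonian cycle, use instead a further non-cycle edge of $G$ (which exists because $G$ is not a cycle) --- contradicting that all spanning trees are paths. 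Since a matchgate circuit on the edges of $T$ is a special case of one on the edges of $G$, it suffices to prove the theorem for an arbitrary $n$-vertex tree $T$ that is not a path. Such a $T$ has at least $3$ leaves, hence contains a branching point $b$, and by \lem{graph} it has either $l>\sqrt n$ leaves or a longest path on $p>\sqrt n$ vertices.

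Next I would use $b$ as the site where all two-qubit gates are performed, following \ex{path}. With logical qubits encoded via \eq{evenencoding} and two neighbors of $b$ reserved as $\ket 0$ ancillas $\alpha,\beta$, a logical $\cz$ between any two logical qubits is obtained by shuttling those qubits (and $\alpha,\beta$, if not already in place) to the configuration $\{b,q_1\},\{q_2,q_3\}$ along a path $b-q_1-q_2-q_3-\cdots$ emanating from $b$, applying the sequence \eq{switch}, and reversing the shuttling; a logical single-qubit gate $A_L$ is $G(A,A)$ applied once the two physical qubits of that logical qubit have been brought adjacent. All shuttling is done with $\fswap$ gates, using \eq{0swap} to slide $\ket 0$ ancillas past arbitrary (possibly entangled) physical qubits and \eq{logicswap} to slide entire even-parity logical qubits past single physical qubits. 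Since $T$ has diameter at most $n$, each gate of the target circuit is simulated by $O(n)$ matchgates, giving the claimed polynomial overhead.

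It remains to exhibit $\Omega(\sqrt n)$ logical qubits, which I would do by splitting on \lem{graph}. If $p>\sqrt n$, then some branch at $b$ contains a path $Q$ with endpoint $b$ and $\Omega(\sqrt n)$ vertices: take the longer half of a longest path through $b$, or, if a longest path avoids $b$, prepend to it the route from $b$. Encoding logical qubits in consecutive pairs of vertices of $Q$ gives $\Omega(\sqrt n)$ logical qubits, with $\alpha,\beta$ taken among the other neighbors of $b$, and \ex{path} applies directly. If instead $l>\sqrt n$, I would follow \ex{leaves}: place logical qubits on an arbitrary pairing of $2\lfloor l/2\rfloor$ of the leaves and fill every internal vertex of $T$ with a $\ket 0$ ancilla, for $\lfloor l/2\rfloor=\Omega(\sqrt n)$ logical qubits; \eq{0swap} and \eq{logicswap} then route any logical qubit, and two spare ancillas, to $b$ for the $\cz$ construction and back. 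In either case we obtain a universal gate set on $\Omega(\sqrt n)$ logical qubits with polynomial overhead.

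I expect the main work to be the routing bookkeeping: checking that every $\fswap$ invoked acts on a genuine edge of $T$, that sliding logical qubits around preserves both the even-parity encoding and any entanglement with the other logical qubits (which is precisely what \eq{logicswap} and linearity ensure), and that the total $\fswap$ count --- sorting the logical blocks along $Q$, or threading them through the tree of ancillas to $b$ --- remains polynomial in $n$. The graph-theoretic reduction is elementary but needs the Hamiltonian-cycle case to be singled out, which is exactly where the hypothesis that $G$ is not a cycle enters.
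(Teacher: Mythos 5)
Your reduction to a non-path spanning tree, your use of \lem{graph}, and your handling of case (i) all match the paper's proof in substance (the paper obtains the spanning tree more directly, by growing it from the set of all edges incident to a vertex of degree at least $3$, but your contradiction argument via Hamiltonian paths also works). The genuine gap is in case (ii). There you insist on performing every logical $\cz$ by shuttling two logical qubits and two ancillas to a branching point $b$ and applying the sequence \eq{switch}. That sequence uses the edges $\alpha b$, $\beta b$, $bq_1$, $q_1q_2$, $q_2q_3$, i.e., it needs a path on four vertices emanating from $b$ together with two further neighbors of $b$. A tree in case (ii) need not contain any such configuration: the star is the canonical example ($n-1>\sqrt n$ leaves, longest path on three vertices, no path on four vertices anywhere), and it is precisely the kind of graph that case (ii) must cover, since case (i) fails for it. Your recipe therefore breaks down there; and even on trees that do contain the configuration, you would still have to argue that a logical qubit whose two physical halves sit on leaves in different branches can be brought into the adjacent positions $\{q_2,q_3\}$ after $b$ and $q_1$ are already occupied by non-ancilla states, since \eq{0swap} no longer applies once the intermediate vertices are not in $\ket{0}$.

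The paper's case (ii) avoids all of this, and you should adopt its route: with every internal vertex a $\ket{0}$ ancilla, \eq{0swap} lets you move the state of any leaf to a vertex adjacent to any other leaf, apply an arbitrary matchgate, and move it back, so you obtain effective matchgates between \emph{every} pair of the more than $\sqrt n$ leaf qubits. Universality then follows from the construction of \sec{univ_Jozsa}: the only non-matchgate needed there is the $\swap$ in $\cz=\fswap\cdot\swap$, and with all-to-all effective connectivity that $\swap$ costs nothing---it amounts to relabelling which two physical qubits constitute each logical pair (equivalently, all-to-all connectivity trivially contains the nearest- and next-nearest-neighbor connectivity that already suffices). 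No branching-point gadget, and hence no special six-vertex subtree, is required in this case.
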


\begin{proof}
Since $G$ is not a path or a cycle, it has some spanning tree $T$ that is not a path. This holds because $G$ necessarily contains a vertex of degree more than 2 and one can construct a spanning tree that includes all edges adjacent to this vertex. It suffices to show that universal computation can be implemented in $T$, since all edges of $T$ are edges of $G$. By \lem{graph}, either (i) the longest path of $T$ or (ii) the set of all its leaves must have more than $\sqrt{n}$ vertices. 

First, suppose (i) holds. Assign each qubit of a longest path of $T$ as a computational qubit, with the exception of one qubit at a branching point. We also use one qubit adjacent to the branching point and not in the path as an ancilla. All other qubits are ignored. We implement the circuit as shown in \ex{path} of the previous section. Since the longest path has more than $\sqrt{n}$ vertices by hypothesis, this allows the simulation of an arbitrary quantum circuit on $\lfloor (\sqrt{n}-1)/2 \rfloor$ qubits. This simulation uses $O(n)$ $\fswap$ operations for each two-qubit gate.

Otherwise (ii) holds, so $T$ has more than $\sqrt{n}$ leaves. Proceed by assigning every qubit at a leaf as a computational qubit and initializing every other qubit as a $\ket{0}$ ancilla. The intermediate vertices on the (unique) path between any two leaves represent qubits in the $\ket{0}$ state. As in \ex{leaves}, we can use the identity of \eq{0swap} to move the state of any qubit to a vertex adjacent to any other, implement a matchgate, and move it back. Thus we can effectively implement any matchgate between any pair of logical qubits. Since the longest path has length less than $\sqrt{n}$, this simulation uses $O(\sqrt{n})$ $\fswap$ operations for each gate in the original circuit.
\end{proof}

\section{Classical simulation of matchgates on the path and cycle} \label{sec:simul}

In the previous section, we proved the universality of matchgates on any connected graph that is not a path or a cycle. We now show that this is also a necessary condition (assuming that quantum computers cannot be efficiently simulated classically).

As mentioned in \sec{intro}, it is well-known that matchgates on a path can be simulated classically for any product state input and computational basis measurement \cite{Valiant02, Terhal02, Jozsa08b}. We briefly review the proof of this fact as shown in \cite{Jozsa08b} and then generalize the proof to the case of a cycle.

\subsection{The Jordan-Wigner transformation and classical simulation of nearest-neighbor matchgates} \label{sec:simul_line}

We begin by defining the Jordan-Wigner operators \cite{Jordan28} acting on $n$ qubits:
\begin{align} \label{eq:JW}
c_{2j-1} &:= \left( \prod_{i=1}^{j-1} Z_i \right) X_j \notag \\
c_{2j} &:= \left( \prod_{i=1}^{j-1} Z_i \right) Y_j
\end{align}
for $j \in \{1,\ldots,n\}$, where $X_i,Y_i,Z_i$ denote the Pauli $X$, $Y$, and $Z$ operators, respectively, acting on qubit $i$. Using this transformation, we can write
\begin{align}
c_{2k-1} c_{2k} & = i Z_k \label{eq:JWtransf1}
\end{align}
for $k \in \{1,\ldots,n\}$ and
\begin{align}
c_{2k} c_{2k+1} & = i X_k X_{k+1} \notag \\
c_{2k-1} c_{2k+2} & = -i Y_k Y_{k+1} \notag \\
c_{2k-1} c_{2k+1} & = -i Y_k X_{k+1} \notag \\
c_{2k} c_{2k+2} & = i X_k Y_{k+1} \label{eq:JWtransf2}
\end{align}
for $k \in \{1,\ldots,n-1\}$. These two-qubit Hamiltonians are precisely the generators of the group of nearest-neighbor matchgates \cite{Terhal02}.

Suppose that the circuit being simulated has an initial product state input $\ket{\psi}=\ket{\psi_1} \ket{\psi_2}\ldots\ket{\psi_n}$, a sequence of nearest-neighbor matchgates, and a final measurement in the computational basis. To simulate the final measurement of qubit $k$, it suffices to calculate of the expectation value $\langle Z_k \rangle$ = $-i \langle c_{2k-1} c_{2k} \rangle = -i \bra{\psi} U^{\dagger} c_{2k-1} c_{2k} U \ket{\psi}$, where $U$ is the unitary corresponding to the action of the matchgate circuit. To show that this can be calculated efficiently, we invoke the following (cf. \cite{Knill01,Terhal02,Jozsa08b}, as stated in \cite{Jozsa08b}):

\begin{theorem} \label{thm:quadratic}
Let H be any Hamiltonian quadratic in the operators $c_i$ and let $U = e^{iH}$ be the corresponding unitary. Then, for all $\mu \in \{1,\ldots,2n\}$,
\begin{equation}
U^{\dagger} c_\mu U = \sum_{\nu=1}^{2n} R_{\mu,\nu} c_\nu,
\end{equation}
where $R\in\SO(2n)$, and we obtain all of $\SO(2n)$ this way.
\end{theorem}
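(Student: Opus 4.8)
The plan is to exploit the fact that the Jordan--Wigner operators of \eq{JW} are Majorana fermion modes: they are Hermitian and satisfy the Clifford relations $\{c_\mu,c_\nu\}=2\delta_{\mu\nu}I$. Because $c_\alpha c_\beta=-c_\beta c_\alpha$ for $\alpha\neq\beta$, Hermiticity forces any Hamiltonian quadratic in the $c_i$ to have the form $H=\tfrac{i}{4}\sum_{\alpha,\beta}h_{\alpha\beta}\,c_\alpha c_\beta$ for a real antisymmetric matrix $h\in\mathbb{R}^{2n\times 2n}$, up to an additive constant that only contributes a global phase to $U$ and hence does not affect $U^\dagger c_\mu U$. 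The claim is then a statement about the adjoint action of such a $U$ on the span of the $c_\mu$.

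First I would show this adjoint action is \emph{linear}. Set $c_\mu(t):=e^{-itH}c_\mu e^{itH}$, so $c_\mu(0)=c_\mu$ and $c_\mu(1)=U^\dagger c_\mu U$. A short Clifford-algebra computation gives $[c_\mu,c_\alpha c_\beta]=2\delta_{\mu\alpha}c_\beta-2\delta_{\mu\beta}c_\alpha$, hence $i[c_\mu,H]=\sum_\nu M_{\mu\nu}c_\nu$ for a real matrix $M$ depending linearly and invertibly on $h$ (one finds $M=-h$, in particular $M$ is antisymmetric). Differentiating $c_\mu(t)$ then yields the linear system $\dot c_\mu(t)=\sum_\nu M_{\mu\nu}c_\nu(t)$, whose solution is $c_\mu(t)=\sum_\nu(e^{tM})_{\mu\nu}c_\nu$; taking $t=1$ gives the asserted form with $R=e^{M}$.

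Next I would identify the group. Conjugation by the unitary $U$ preserves anticommutators, so $2\delta_{\mu\nu}=\{c_\mu(1),c_\nu(1)\}=\sum_{\alpha,\beta}R_{\mu\alpha}R_{\nu\beta}\{c_\alpha,c_\beta\}=2(RR^{\mathsf T})_{\mu\nu}$, i.e.\ $R$ is orthogonal (equivalently, $M$ antisymmetric $\Rightarrow$ $e^M$ orthogonal). Since $t\mapsto e^{tM}$ is a continuous path in the orthogonal group from $I$ to $R$ --- or simply since $\det e^M=e^{\operatorname{tr}M}=1$ --- we get $\det R=+1$, so $R\in\SO(2n)$.

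Finally, for ``we obtain all of $\SO(2n)$ this way,'' the point is that $h\mapsto M=-h$ ranges over \emph{all} real antisymmetric matrices, i.e.\ over the entire Lie algebra of $\SO(2n)$; since $\SO(2n)$ is compact and connected, the exponential map is surjective, so every $R\in\SO(2n)$ equals $e^M$ for some such $M$ and is therefore realized by $U=e^{iH}$ with $H$ the corresponding quadratic Hamiltonian. (If one prefers to avoid surjectivity of $\exp$, the $R$-matrices multiply under composition of such $U$'s, and exponentials of a spanning set of the Lie algebra generate the whole connected group.) I expect the only real obstacle to be bookkeeping: fixing Clifford-algebra sign conventions so that the correspondence $h\leftrightarrow M$ is transparently a linear bijection onto the antisymmetric matrices. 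Once that is in place, orthogonality, unit determinant, and surjectivity are all immediate.
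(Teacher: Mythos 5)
Your proposal is correct and follows essentially the standard argument: the paper itself does not spell out a proof but defers to Jozsa and Miyake \cite{Jozsa08b}, whose proof is exactly this computation (quadratic Hamiltonian $\leftrightarrow$ real antisymmetric matrix, linear adjoint action via the commutator identity, orthogonality from preservation of the Clifford relations, $\det R=1$ by continuity, and surjectivity from the exponential map of the compact connected group $\SO(2n)$). Your sign bookkeeping ($M=-h$, $R=e^{M}$) checks out, so there is nothing to add.
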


The straightforward proof of this theorem appears in \cite{Jozsa08b}. Observe that, according to \eqs{JWtransf1}{JWtransf2}, the Hamiltonians that generate nearest-neighbor matchgates are quadratic in the operators $c_i$, so
\begin{align} \label{eq:expectedZ}
\langle Z_k \rangle & = -i \bra{\psi} U^{\dagger} c_{2k-1} c_{2k} U \ket{\psi} \\ \notag
& = -i \sum_{a,b=1}^{n} R_{2k-1, a} R_{2k, b} \bra{\psi} c_a c_b \ket{\psi}.
\end{align}

If $t$ is the number of matchgates in the circuit, $R \in \SO(2n)$ can be calculated in $\poly(n,t)$ time as the product of the rotations corresponding to each matchgate. Also notice that the sum in \eq{expectedZ} has only $O(n^2)$ terms. Finally, note that $\ket{\psi}$ is a product state, and any monomial $c_a c_b$ is a tensor product of Pauli matrices, as is clear from \eq{JW}. Thus, each term in the sum factors as a product of the form $\prod_{i=1}^{n} \bra{\psi_i} \sigma_i \ket{\psi_i}$, which involves $n$ efficiently computable terms. Since $\langle Z_k \rangle$  is a sum of a polynomial number of efficiently computable terms, it can be computed efficiently, which completes the proof of classical simulability of matchgates on a path.

\subsection{Classical simulation of matchgates on a cycle} \label{sec:simul_cyc}

The result of the previous section does not immediately apply to the case of a cycle, which corresponds to applying periodic boundary conditions to a path, because a matchgate between the first and last qubits does not translate into a Hamiltonian that is quadratic in the $c_i$s, and vice versa. For example,
\begin{equation} \label{eq:notnnmatch}
c_1 c_{2n} = i X_1 X_n \prod_{i=1}^{n} Z_i,
\end{equation}
which is clearly not a matchgate, as it is a unitary operation acting on every qubit in the circuit. 

Note that \thm{quadratic} still applies to the Hamiltonian in \eq{notnnmatch} even though it does not correspond to a matchgate. However, we do not have a straightforward way of writing the operators we need, such as $X_1 X_n$, in terms of these quadratic operators. 

To show that matchgates are simulable in this case nonetheless, first consider the case where the input state $\ket{\psi}$ is a computational basis state. Suppose that $\ket{\psi}$ has even parity (e.g., $\ket{000\ldots0}$). Matchgates preserve parity, so the state at any point in the computation has a well-defined (even) parity. Now notice that $\prod_{i=1}^{n} Z_i$ is the operator that measures overall parity, so it acts as the identity on the even-parity subspace. This means that for any even-parity input we have the correspondence
\begin{equation} 
X_1 X_n = X_1 X_n \prod_{i=1}^{n} Z_i = - i c_1 c_{2n} \quad \text{(even parity)},
\end{equation}
where the second equality is just \eq{notnnmatch}. The equivalent equations for $Y_1 Y_n$, $X_1 Y_n$, and $Y_1 X_n$ are straightforward. Since we have recovered a correspondence between matchgates on qubits $1$ and $n$ and quadratic Hamiltonians, the simulation can be carried out exactly as in \sec{simul_line}. The case of an odd-parity input state (e.g., $\ket{100\ldots0}$) is analogous, except that the operator $\prod_{i=1}^{n} Z_i$ now acts as minus the identity, and we write
\begin{equation} 
X_1 X_n = - X_1 X_n \prod_{i=1}^{n} Z_i =  i c_1 c_{2n} \quad \text{(odd parity)}
\end{equation}
and its equivalents for $Y_1 Y_n$, $X_1 Y_n$, and $Y_1 X_n$. 

Now consider a general product input state $\ket{\psi}$. Let $\ket{\psi_{\pm}}$ denote the projections of $\ket{\psi}$ onto the even- and odd-parity subspaces, respectively. The expectation value $\langle Z_K \rangle$, analogous to \eq{expectedZ}, is
\begin{align} \label{eq:simulationcycle}
\langle Z_k \rangle = & -i \bra{\psi} U^{\dagger} c_{2k-1} c_{2k} U \ket{\psi} \notag \\ 
= & -i \sum_{a,b=1}^{n} ( R_{2k-1, a} R_{2k, b} \bra{\psi_{+}} c_a c_b \ket{\psi_{+}} \notag \\ 
& \qquad\quad + R'_{2k-1, a} R'_{2k, b} \bra{\psi_{-}} c_a c_b \ket{\psi_{-}} ).
\end{align}
Here $R$ and $R'$ correspond to two sets of rotations, where $R'$ includes an extra minus sign for every matchgate applied between qubits $1$ and $n$. The expression above does not contain cross terms such as $\bra{\psi_{-}} c_{a} c_{b} \ket{\psi_{+}}$ because $c_a c_b$ preserves parity.

The sum in \eq{simulationcycle} contains a polynomial number of terms, just as in \eq{expectedZ}, but now each term may not be easy to compute, since $\ket{\psi_{\pm}}$ are not product states in general. However, we have
\begin{align*}
\bra{\psi} c_{a} c_{b} \ket{\psi} & = \bra{\psi_{+}} c_{a} c_{b} \ket{\psi_{+}} + \bra{\psi_{-}} c_{a} c_{b} \ket{\psi_{-}}, \\
\bra{\psi} c_{a} c_{b} \prod_{i=1}^{n} Z_i \ket{\psi} & = \bra{\psi_{+}} c_{a} c_{b} \ket{\psi_{+}} - \bra{\psi_{-}} c_{a} c_{b} \ket{\psi_{-}}.
\end{align*}
We can invert these equations to obtain
\begin{align}
\bra{\psi_{+}} c_{a} c_{b} \ket{\psi_{+}} & = \frac{1}{2} \left[ \bra{\psi} c_{a} c_{b} \ket{\psi}+\bra{\psi} c_{a} c_{b} \prod_{i=1}^{n} Z_i \ket{\psi} \right ], \notag \\
\bra{\psi_{-}} c_{a} c_{b} \ket{\psi_{-}} & = \frac{1}{2} \left[ \bra{\psi} c_{a} c_{b} \ket{\psi}-\bra{\psi} c_{a} c_{b} \prod_{i=1}^{n} Z_i \ket{\psi} \right ]. \label{eq:expectedparity}
\end{align}

The left-hand sides are precisely the two terms of $\langle Z_k \rangle$ that we need, while the right-hand sides are combinations of terms that can be efficiently computed, as both are expected values of products of Pauli operators on product states. Explicitly, if $\ket{\psi}=\ket{\psi_1}\ket{\psi_2}\ldots\ket{\psi_n}$ and $c_a c_b = \sigma_1 \sigma_2 \ldots \sigma_n$, we have
\begin{align}
\bra{\psi} c_{a} c_{b} \ket{\psi} &= \prod_{i=1}^{n} \bra{\psi_i} \sigma_i \ket{\psi_i}, \\
\bra{\psi} c_{a} c_{b} \prod_{i=1}^{n} Z_i \ket{\psi} &= \prod_{i=1}^{n} \bra{\psi_i} \sigma_i Z_i \ket{\psi_i}.
\end{align}

Plugging \eq{expectedparity} into \eq{simulationcycle}, we recover an expression that can be efficiently computed in the same manner as \eq{expectedZ}, with only four times as many terms. This gives an efficient classical simulation for matchgates acting on a cycle, as claimed.

Note that the simulation scheme of \sec{simul_line} was recently exploited \cite{Jozsa10} to show that circuits of nearest-neighbor matchgates on $n$ qubits are equivalent to general quantum circuits on $O(\log n)$ qubits, and subsequently \cite{Kraus11, Boyajian13} to show a protocol for ``compressed'' simulations (i.e., with quantum circuits on $O(\log n)$ qubits) of the Ising and XY models of spin systems with open boundary conditions. We leave it as an open question whether the observations made in this section lead to analogous results for systems with periodic boundary conditions.

\section{Universality of the XY interaction on arbitrary graphs} \label{sec:XY}

In \sec{match_arbit} and \sec{simul}, we investigated the computational power of the set of all matchgates on arbitrary graphs. We now consider the computational power of a restricted set of matchgates corresponding to the XY (or anisotropic Heisenberg) interaction on arbitrary graphs. This interaction corresponds to a subset of matchgates generated by the Hamiltonian $H := X \otimes X + Y \otimes Y$ (recall from \sec{simul_line} that matchgates are generated by the two-qubit Hamiltonians $X \otimes X$, $X \otimes Y$, $Y \otimes X$, $Y \otimes Y$ together with the single-qubit Hamiltonian $Z$). It is easy to see that these interactions form a proper subset of matchgates as, e.g., they act non-trivially only on the odd-parity subspace of the $2$-qubit Hilbert space.

The Hamiltonian $H$ is an idealized model of the interactions present in several proposed physical implementations of quantum computing, such as quantum dots \cite{Imamoglu99, Quiroga99}, atoms in cavities \cite{Zheng00}, and quantum Hall systems \cite{Mozyrsky01}. A comparison of these proposals can be found in \cite{Lidar01}.

Despite being a proper subset of matchgates, the XY interaction is also known \cite{Kempe02} to be universal for quantum computation when acting on the graph of \fig{triangladder} (i.e., nearest and next-nearest neighbor interactions on a path). It also follows trivially from \sec{simul} that the XY interaction is classically simulable on paths and cycles. This prompts the question of whether our results from \sec{match_arbit} can be adapted for the XY interaction on arbitrary graphs.

In fact, we now show that the XY interaction alone is universal for quantum computation on any connected graph that is not a path or a cycle. Since these operations are a subset of matchgates, this result subsumes the one of \sec{match_arbit}. However, the argument we give for the XY interaction is less explicit, and the simulation is less efficient in general.

First observe that the XY interaction acts trivially on the even-parity subspace, so the encoding of \eq{evenencoding} cannot be used. A suitable alternative (as used in \cite{Kempe02}) is
\begin{align}
\ket{0}_L & = \ket{01}, \notag \\
\ket{1}_L & = \ket{10}, \label{eq:oddencoding}
\end{align}
which is simply the corresponding encoding on the odd-parity subspace. 

We also need to adapt some of the identities used in \sec{match_arbit}. The fermionic $\swap$ gate is not available, so instead we use the following similar gate, which we call the $\iswap$ (and denote by the shorthand $\is$):
\begin{equation} \label{eq:i-SWAP}
\is := \exp( i \tfrac{\pi}{4} H ) = G(I, iX) = \begin{pmatrix}
1 & 0 & 0 & 0 \\
0 & 0 & i & 0 \\
0 & i & 0 & 0 \\
0 & 0 & 0 & 1
\end{pmatrix}.
\end{equation}

For an arbitrary logical state $\ket{\Psi}_L = \alpha \ket{10} + \beta \ket{01}$ in the encoding of \eq{oddencoding}, and for any physical qubit in an arbitrary state $\ket{\phi}$, we have the following identity (already used implicitly in \cite{Kempe02}):
\begin{equation} \label{eq:logicswap2}
\is_{12} \, \is_{23} \ket{\Psi}_L \ket{\phi} = i \ket{\phi} \ket{\Psi}_L.
\end{equation}
Thus these states can be swapped up to an irrelevant global phase. 

Another useful identity, akin to \eq{0swap}, is given by 
\begin{equation} \label{eq:0iswap}
\is_{12}\ket{0} \ket{\psi} = \left ( P \ket{\psi} \right ) \ket{0},
\end{equation}
where $\ket{\psi}$ is any state and $P := \diag(1,i)$. This identity has a familiar operational interpretation: once more any state can be ``swapped through'' a $\ket{0}$ ancilla, but now with the caveat that the state suffers an unwanted $P$ gate. We must take this into account when using \eq{0iswap} in a simulation, but one can already see that if we only need to swap states through an even number of ancillas at a time, we can cancel out the $P$ gates by alternating $\iswap$ and $\iswap$$^\dagger$ swapping operations. In fact, a trivial adaptation of \thm{maintheo} gives a proof of universality for those graphs that have an odd cycle (i.e., non-bipartite graphs), since then there is always an even-length path between any two vertices. We state this without proof, as the details are not instructive and the result is implied by the general case. Note however that for non-bipartite graphs, one can obtain a universal set of unitary matrices, whereas for general graphs we will only obtain a universal set of orthogonal matrices.

\begin{figure}[t]
\capstart
\centering
\subfloat[]{\includegraphics[width=0.13\textwidth]{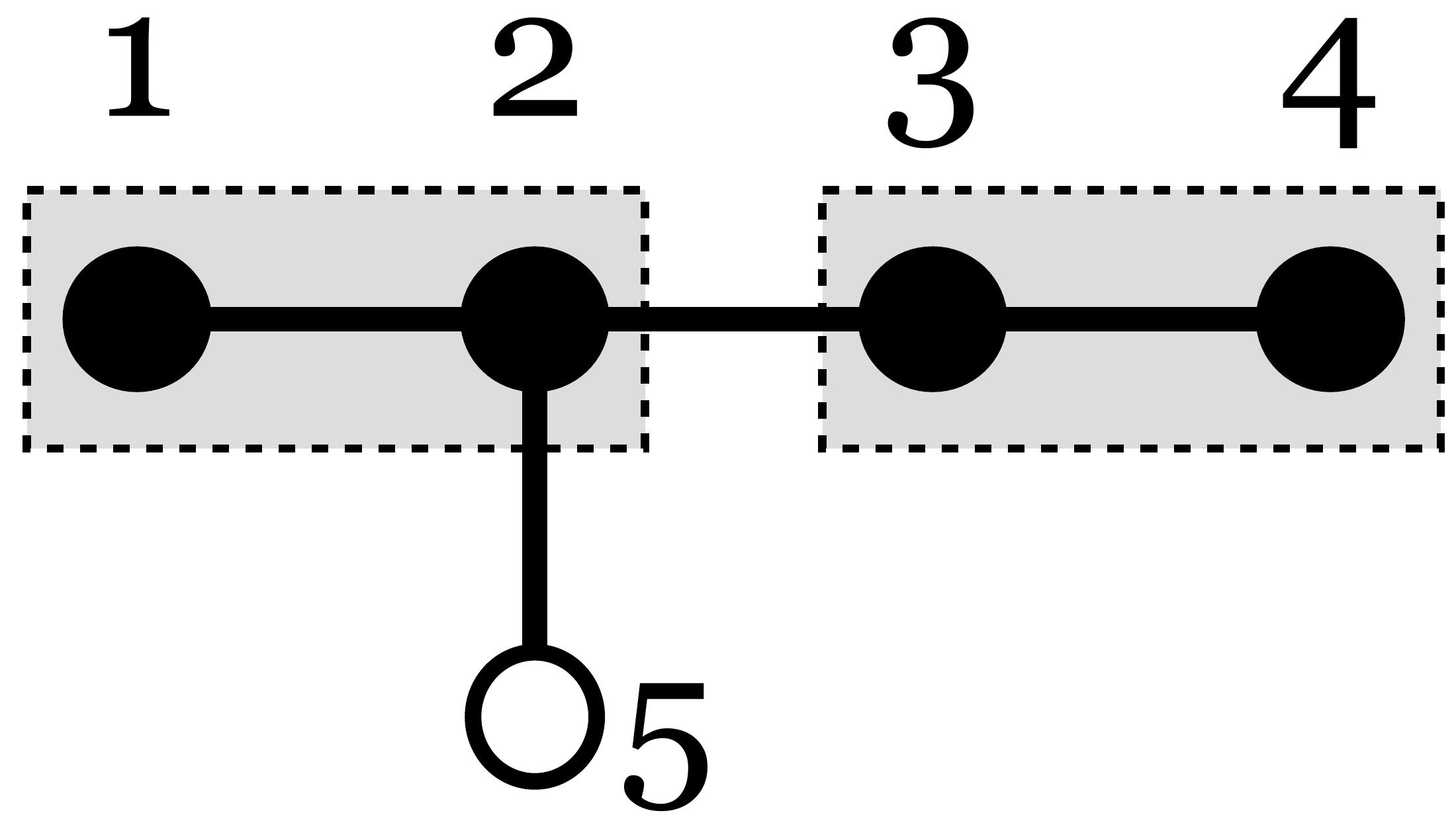}} \qquad
\subfloat[]{\includegraphics[width=0.13\textwidth]{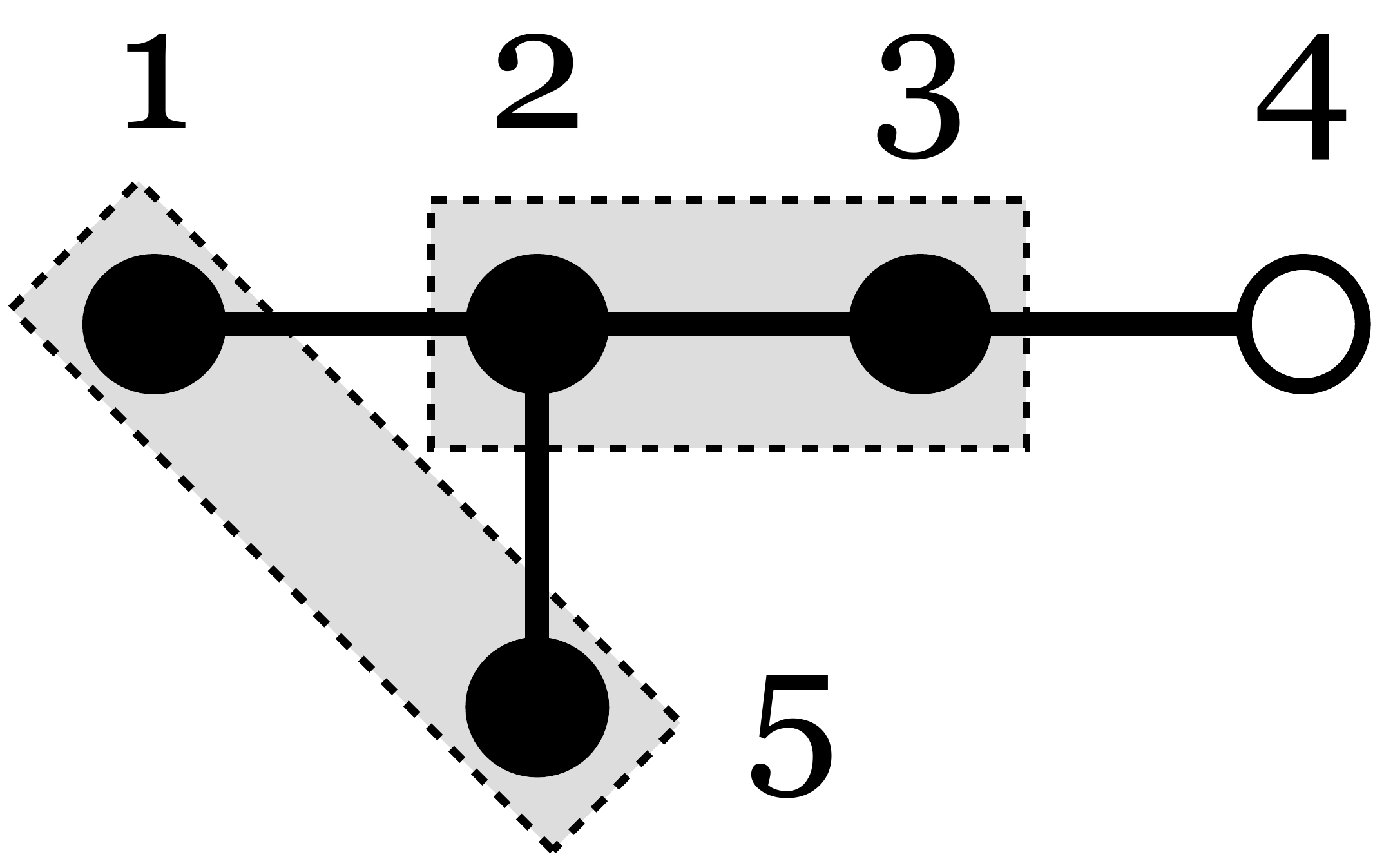}} \qquad
\subfloat[]{\includegraphics[width=0.13\textwidth]{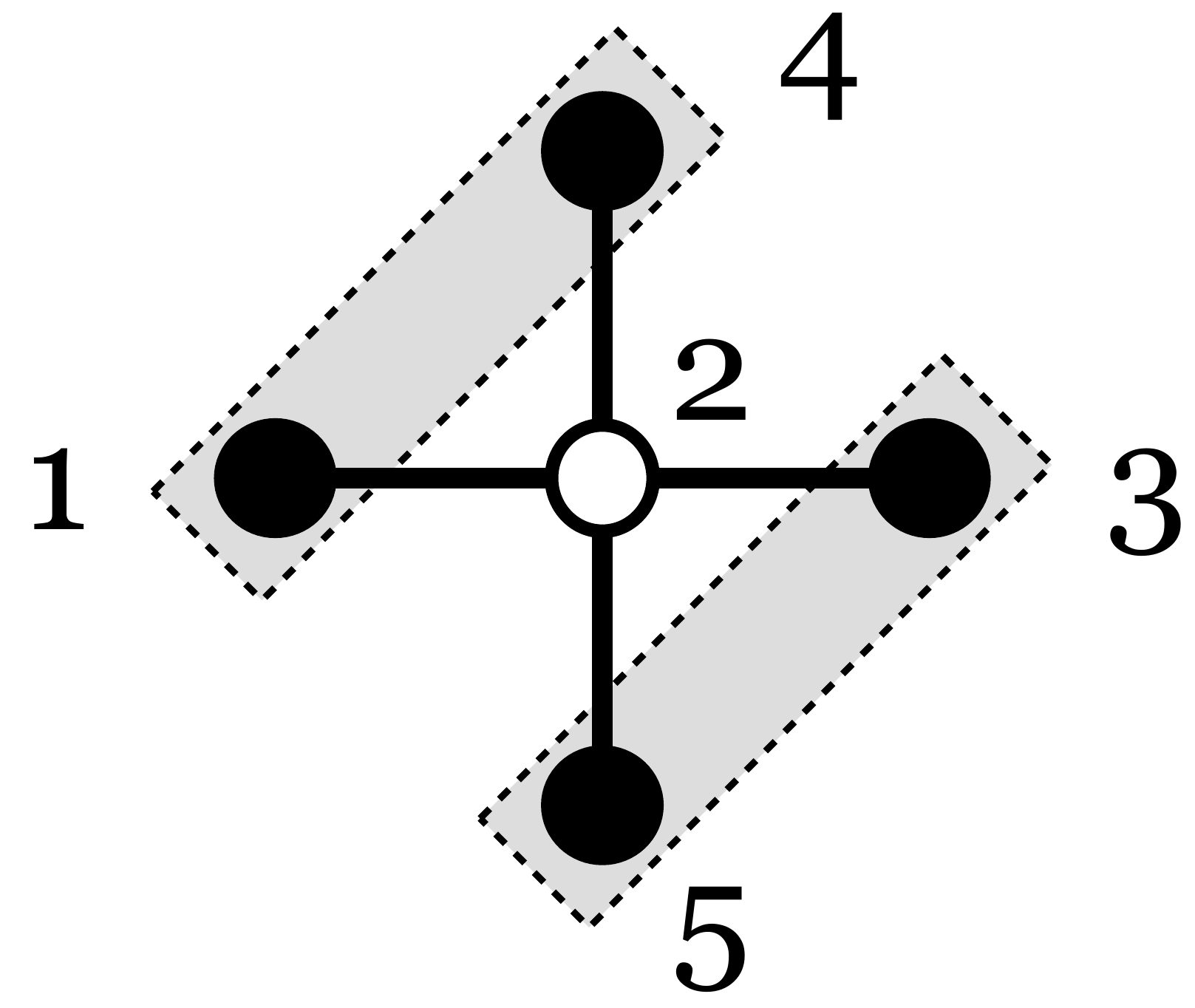}} \\
\caption{5-vertex graphs for implementing a universal set of logical two-qubit gates with the XY interaction. In all figures, gray boxes identify pairs of physical qubits that make up a logical qubit and white vertices represent ancillas initialized as $\ket{0}$.}
\label{fig:5vertex}
\end{figure} 

We first show how to implement a particular set of one- and two-qubit gates on the two 5-vertex graphs of \fig{5vertex}, similar to the simulation in \sec{examples} (cf.\ \fig{branching}). Suppose the two logical qubits can be initialized as in \fig{5vertex}a, according to the encoding of \eq{oddencoding}, together with one $\ket{0}$ ancilla. 

Since
\begin{equation}
H = \begin{pmatrix}
0 & 0 & 0 & 0 \\
0 & 0 & 2 & 0 \\
0 & 2 & 0 & 0 \\
0 & 0 & 0 & 0
\end{pmatrix},
\end{equation}
a logical $X$ rotation on the logical qubit stored in physical qubits $\{1,2\}$ can be implemented by a simple XY interaction:
\begin{equation}
\exp( i a X_L ) = \exp( i \tfrac{a}{2} H_{12} ) = \begin{pmatrix}
1 & 0 & 0 & 0 \\
0 & \cos{a} & i \sin{a} & 0 \\
0 & i \sin{a} & \cos{a} & 0 \\
0 & 0 & 0 & 1
\end{pmatrix}.
\end{equation}

We can also implement the two-qubit gate $R_{XZ}(a) := \exp(i a \, X \otimes Z)$ on the logical qubits $\{1,2\}$ and $\{3,4\}$ by the following sequence:
\begin{equation} \label{eq:2qubitg}
\is_{25} \, \is_{23} \, \is_{34} \, \left[ \is_{25}^\dagger \, \exp( i \tfrac{a}{2} H_{12} ) \, \is_{25} \right ] \, \is_{34}^\dagger \, \is_{23}^\dagger \, \is_{25}^\dagger.
\end{equation}
This sequence works as follows. The first three $\iswap$ gates use \eq{0iswap} to swap the qubits and place them as in \fig{5vertex}b. Notice that the first logical qubit suffers a $P$ gate during this operation. The sequence inside the square brackets implements an effective unitary with Hamiltonian $Y \otimes Z$. This can be verified by explicit multiplication, but can also be understood as follows: the $\is_{25}$ and $\is_{25}^{\dagger}$ swap qubits $2$ and $5$, leaving the first logical qubit encoded in pair $\{1,2\}$, up to some phases that depend upon the states of both qubits. The $H_{12}$ Hamiltonian then acts as a logical $X$ rotation on the first qubit. Keeping track of the dependence of the relative phases on the states of both qubits, we see that the overall operation is $Y \otimes Z$. Finally, the last three $\iswap$ gates return the states of all qubits to their original positions, while inducing a $P^{\dagger}$ gate on the first logical qubit. Since $P^{\dagger} Y P=X$, the overall operation on the encoded states is $X \otimes Z$, as claimed.

We now make a brief digression to explain why the set of Hamiltonians 
\begin{equation}
  \A := \{X \otimes I, I \otimes X, X \otimes Z, Z \otimes X, X \otimes Y, Y \otimes X \}
\end{equation}
is universal for quantum computation in the usual circuit model. First notice that the Hamiltonians $X \otimes Y$ and $Y \otimes X$ are included; this is without loss of generality, as they can be obtained as simple sequences of the remaining interactions, e.g., $X \otimes Y = U(X \otimes Z)U^{\dagger}$ where $U=\exp[ i \tfrac{\pi}{4} (I \otimes X)]$. By conjugating every element in $\A$ by $P$, we obtain the set 
\begin{equation}
  \B := \{Y \otimes I, I \otimes Y, Y \otimes Z, Z \otimes Y, X \otimes Y, Y \otimes X \}.
\end{equation}
These are exactly the generators of the special orthogonal group $\SO(4)$. This can be seen by writing them down explicitly, but also understood by a counting argument, as $\B$ contains six linearly independent, purely imaginary $4 \times 4$ matrices.

Now we recall the well-known fact (see, e.g., \cite{Bernstein93} and \cite{Rudolph02}) that universal quantum computation is possible using only orthogonal, rather than general unitary, matrices, with the overhead of one extra ancilla qubit and a polynomial number of operations. Furthermore, any special orthogonal matrix on $n$ qubits [i.e., in $\SO(2^n)$] can be decomposed in terms of $\SO(4)$ gates acting non-trivially only on pairs of qubits, so the set $\B$ is universal for quantum computation. But this means that the set $\A$ is also universal, since we can assume that initialization and measurements are done in the computational basis, so the initial and final single-qubit $\{P,P^{\dagger}\}$ gates do not affect the outcomes. 

While the graph in \fig{5vertex}a may not appear as a subgraph of the given graph, the sequence \eq{2qubitg} can be easily adapted to the graph of \fig{5vertex}c. In that case, we can just use \eq{0iswap} to swap the ancilla with any of the other qubits and obtain a similar arrangement to that of \fig{5vertex}b. The corresponding sequence is
\begin{equation} \label{eq:2qubitgb}
\is_{24} \, \left[ \is_{25}^\dagger \, \exp( i \tfrac{a}{2} H_{12} ) \, \is_{25} \right ] \, \is_{24}^\dagger.
\end{equation}
In this case, every operation described before is obtained up to conjugation by $P$, and the set of available operations is $\B$, rather than $\A$. However, as described above, this still suffices for universal computation. 

It remains to show that, for any graph other than a path or cycle, we can assign sufficiently many vertices as computational qubits and swap them around to one of the arrangements of \fig{5vertex} with a polynomial number of operations. 

\begin{theorem} \label{thm:maintheo2}
Let $G$ be any $n$-vertex connected graph, other than a path or a cycle, where every vertex represents a qubit and we can implement the interaction $H=X \otimes X + Y \otimes Y$ between any nearest neighbors in $G$. Then it is possible to efficiently simulate any quantum circuit on $\Omega(\sqrt{n})$ qubits.
\end{theorem}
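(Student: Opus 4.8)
The plan is to run the two-case argument used in the proof of \thm{maintheo}, now with the odd-parity encoding \eq{oddencoding} and the XY versions \eq{logicswap2} and \eq{0iswap} of the swap identities. Since $G$ is neither a path nor a cycle it contains a vertex of degree at least $3$, and (as in the proof of \thm{maintheo}) it therefore has a spanning tree $T$ that is not a path; it suffices to implement universal computation using only the edges of $T$. Apply \lem{graph}: either (i) $T$ has a path of length greater than $\sqrt n$, or (ii) $T$ has more than $\sqrt n$ leaves. In both cases $T$ has a branching point, and near it we will assemble --- transporting in a few qubit states as needed --- a configuration of the type shown in \fig{5vertex}, whose only real structural requirement is essentially a vertex adjacent to three others.

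In case (i) I would follow the XY analogue of \ex{path}: take the vertices of a longest path of $T$ as computational qubits, paired into logical qubits via \eq{oddencoding}, with one vertex adjacent to an interior branching point of that path set aside as a $\ket0$ ancilla. A logical single-qubit $X$-rotation is then just $\exp(i\tfrac a2 H)$ on the appropriate pair. To apply a logical two-qubit gate, transport the two target logical qubits along the path toward the branching point using \eq{logicswap2} --- which moves an entire logical qubit past a bare physical qubit at the cost of only a global phase --- maneuver in the ancilla with \eq{0iswap}, apply \eq{2qubitg} (or \eq{2qubitgb}) to realize a generator of $\A$ (resp.\ $\B$), and reverse the transport. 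Since $\A$ is universal and $\B$ generates $\SO(4)$ and so is universal at the cost of one extra ancilla, and a longest path of $T$ has $\Omega(\sqrt n)$ vertices, this simulates an arbitrary circuit on $\Omega(\sqrt n)$ qubits with $O(n)$ $\is$ gates per original gate.

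Case (ii) is the more delicate one. Here $T$ has more than $\sqrt n$ leaves and, being outside case (i), a longest path of length at most $\sqrt n$. Following \ex{leaves}, use the leaves (paired into logical qubits) as computational qubits and every internal vertex as a $\ket0$ ancilla, so that any two leaves are joined by an all-ancilla path of length $O(\sqrt n)$. Using \eq{0iswap} repeatedly, carry the state of any physical qubit to a vertex adjacent to any other; doing this for the four physical qubits of two logical qubits together with one spare internal ancilla assembles a \fig{5vertex}-type configuration near a branching point, where the gadget can be applied and then undone. Unlike \eq{logicswap2}, each use of \eq{0iswap} on a bare qubit leaves behind a $P$, which must be tracked as in the discussion preceding this theorem. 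On a logical pair these factors combine into a logical $Z$- or $P$-type rotation, so the net effect is to replace the implemented set by a $P$-conjugate of $\B$; when $G$ is non-bipartite one can always route through an even number of ancillas and cancel them, recovering the unitary set $\A$, but in general one is left only with orthogonal gates. In every case the set obtained still generates a computationally universal group, and since input and measurement are in a fixed basis the leftover single-qubit $P$-rotations at the ends are harmless.

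The step I expect to be the main obstacle is precisely this bookkeeping of the spurious $P$ gates in case (ii): whereas in \thm{maintheo} the $\fswap$ passes through $\ket0$ ancillas cleanly, here one must verify that the $P$ factors accumulated while shuttling bare qubits always collapse, up to classically trackable corrections, into conjugation of $\B$ by a controlled power of $P$, rather than into something that breaks universality. A secondary point to get right is that the transport in case (ii) works even for degenerate trees such as the star, where the two physical qubits of a logical qubit occupy widely separated leaves and must be moved independently, and that a \fig{5vertex}-type neighborhood is always realizable at a branching point of $T$.
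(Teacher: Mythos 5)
Your overall route is the same as the paper's: reduce to a non-path spanning tree, invoke \lem{graph}, handle case (i) with \eq{logicswap2} and the gadget \eq{2qubitg}, and case (ii) by shuttling leaf qubits through $\ket{0}$ ancillas via \eq{0iswap}. Case (i) is essentially complete as you describe it (the transport there uses \eq{logicswap2}, which only produces global phases, so no $P$ bookkeeping arises). The problem is that the step you yourself flag as ``the main obstacle'' in case (ii) is left unresolved, and it is exactly the one nontrivial point of the proof. Your hope that the accumulated $P$ factors ``always collapse\ldots into conjugation of $\B$ by a controlled power of $P$'' and that ``in every case the set obtained still generates a computationally universal group'' is not verified, and as stated it is not obviously true: if one logical qubit of a pair arrives at the gadget having acquired a logical $P$ while the other has not, the implemented two-qubit Hamiltonian is conjugated by $P\otimes I$ rather than by $P\otimes P$, yielding a \emph{mixed} set of generators that is neither $\A$ nor $\B$, and whose universality would require a separate argument.

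The paper's resolution is a simple combinatorial sidestep rather than an algebraic one: since the gadget sits at a fixed location in the tree, whether a given logical qubit picks up a logical $P$ under transport is a fixed property of that logical qubit (determined by the parities of the distances its two physical qubits travel; the $P^{\dagger}$ case is eliminated by choosing $\iswap^{\dagger}$ as the shuttling gate). This partitions the logical qubits into two classes; one keeps only the larger class, which still has at least $\sqrt{n}/4$ logical qubits, and discards the rest. Within a single class every two-qubit gate lands consistently in $\A$ or consistently in $\B$, each of which is already known to be universal (for $\B$, as a generating set of $\SO(4)$ plus the real-computation result of \cite{Rudolph02}). You should also make your ``secondary point'' about the star explicit: \fig{5vertex}a is \emph{not} a subgraph of a star (its longest path is too short), which is why the paper substitutes the sequence \eq{2qubitgb} on the configuration of \fig{5vertex}c in that case; asserting that a \fig{5vertex}-type neighborhood ``is always realizable at a branching point'' glosses over this.
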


\begin{proof}
As in \thm{maintheo}, it suffices to prove the universality of $H$ on any $n$-vertex tree $T$ that is not a path.

By \lem{graph}, either (i) the longest path of $T$ or (ii) the set of all its leaves must have more than $\sqrt{n}$ vertices. Suppose first that (i) holds. Then the universal construction is directly analogous to case (i) of \thm{maintheo}. Simply assign pairs of adjacent vertices on the longest path as logical qubits, and every other as a $\ket{0}$ ancilla. Then, by using \eq{logicswap2}, we can swap any two logical qubits to the closest degree-3 vertex, where we use sequence \eq{2qubitg} to implement the $X \otimes Z$ Hamiltonian as per \fig{5vertex}a. As explained previously, this together with the logical $X$ Hamiltonian on any qubit (given by $H$ on adjacent qubits) enables universal computation with overhead of at most $O(n)$ $\iswap$ operations per orthogonal matrix in the original circuit of \cite{Rudolph02}. 

Otherwise, (ii) holds. Then, first suppose that $T$ is not a star. Any such $T$ contains the graph of \fig{5vertex}a as a subgraph, so we assign those $5$ vertices as $\ket{0}$ ancillas, together with all non-leaves, and pair the remaining leaves arbitrarily into computational qubits. We can now use \eq{0iswap} to bring the states of any two logical qubits to the structure of \fig{5vertex}a, but with one caveat: this process may induce an overall $P$ gate on some logical qubits, depending on whether an odd or even number of $\ket{0}$ ancillas is traversed. This separates the logical qubits into two disjoint sets, namely those that suffer an overall $P$ gate and those that do not (there is no need to single out the case where the qubits suffer an overall $P^{\dagger}$, as this can be prevented by using $\iswap$$^{\dagger}$, rather than $\iswap$, as the swapping operation). We then take the larger of these two sets, which has at least $\sqrt{n}/4$ logical qubits, and for simplicity we disregard the rest. On the remaining qubits, as argued previously, we can either implement the set of operations
$\A$ or its conjugated-by-$P$ version 
$\B$. Since either set is universal, this gives an universal construction with an overhead of $O(\sqrt{n})$ operations for each gate in the original circuit.

Finally, for the star graph, we replace sequence \eq{2qubitg}, corresponding to \fig{5vertex}a, by the equivalent sequence \eq{2qubitgb} corresponding to \fig{5vertex}c. This enables us to implement the set of Hamiltonians mentioned in the previous paragraph, and concludes the proof.
\end{proof}

\section{Final remarks}

We have completely characterized the computational power of nearest-neighbor matchgates when the qubits are arranged on an arbitrary graph. This continues a line of research started in \cite{Brod12}, where the authors showed that matchgates are universal on many different graphs. Here we proved that the only connected graphs for which matchgates are classically simulable are paths and cycles, whereas on any other connected graph they are universal for quantum computation. Furthermore, we have shown that the same dichotomy holds when we restrict matchgates to the proper subset described by the XY interaction. This further expands the exploration of quantum computation with a single physical interaction \cite{DiVincenzo00,Kempe01b,Kempe02}, and could have applications for a variety of physical systems where the XY interaction arises naturally, if the placement of the qubits is subject to geometrical constraints. 

This dichotomy excludes the possibility that these two sets of interactions (general matchgates and the XY interaction), acting on graphs, could exhibit intermediate computational power such as that displayed by circuits of commuting observables (IQP) \cite{Bremner10} or noninteracting bosons \cite{Aaronson11}. However, this does not rule out such a result for other subsets of matchgates. As one example, consider the set generated by the $X \otimes X$ Hamiltonian acting on some graph. All such operations commute, and this set corresponds to a proper subclass of IQP. Furthermore, it was recently shown \cite{Hoban13} that the set of two-qubit $X \otimes X$ and single-qubit $X$ Hamiltonians are hard to simulate classically, modulo plausible complexity-theoretic assumptions, in the same way as IQP. We leave as open questions whether an analogous result can be obtained by further restricting the operation to only the $X \otimes X$ Hamiltonian, or possibly some other proper subset of matchgates, and how the power of such a model depends on the underlying interaction graph.

In our investigation we have not considered the use of non-trivial measurements to implement other unitary operations---it has been shown, for example, that noninteracting fermions (i.e., matchgates on a path) become universal if nondestructive charge measurements are allowed \cite{Beenakker04}. These charge measurements clearly cannot be implemented by combining matchgates and computational basis measurements. It might be interesting to consider other measurements and/or input states, beyond those obtainable by matchgates, and understand whether they change the computational capabilities of restricted subsets of matchgates on graphs.

While we have established universality of matchgates on any connected graph that is not a path or a cycle, it should be possible to improve the efficiency of our constructions. We have taken an operational approach, where each $\ket{0}$ is seen as an ``empty space'' through which we can move logical qubits, allowing for a simple and unified proof of universality for all graphs. In some cases, such as for the star graph, where all vertices but one are leaves, this construction is optimal. But in many others, our construction could ignore many vertices and/or edges, making it far from optimal. One such case is the binary tree of \fig{binarytree}, where we could have filled most of the non-leaves with logical qubits and used \eq{logicswap} rather than \eq{0swap} whenever it was necessary to ``move'' two logical qubits through each other. Since the bounds of \lem{graph} are tight (e.g., consider the graph obtained from a $\sqrt{n}$-leaf star by subdividing each edge $\sqrt{n}$ times), an optimal simulation presumably requires a more efficient assignment of logical qubits than in \thm{maintheo}. We believe that, while being markedly non-optimal in some cases, our construction nevertheless provides powerful tools for case-by-case optimization. We leave it as an open question whether there is a way to systematically obtain a more efficient construction, and in particular, whether in every case only a constant fraction of the qubits must be discarded as non-computational.

\acknowledgments
We thank Robin Kothari and Laura Man\v{c}inska for a helpful discussion of the proof of \lem{graph}, and Ernesto Galv\~ao for helpful discussions. D.B. would like to acknowledge financial support by Brazilian funding agency CNPq (Conselho Nacional de Desenvolvimento Cient\'ifico e Tecnol\'ogico). This work was also supported in part by NSERC, the Ontario Ministry of Research and Innovation, and the US ARO/DTO.

\end{document}